\newtheorem{theorem}{Theorem}
\DeclareMathAlphabet{\mathcalON}{OT1}{pzc}{m}{n}
\newcommand \s {{\mathrm s}}
\renewcommand \L {{\mathrm L}}
\newcommand \dd {{\mathrm d}}
\DeclareBoldMathCommand \cX{{\cal X}}
\DeclareBoldMathCommand \cA{{\cal A}}
\DeclareBoldMathCommand \cB{{\cal B}}
\DeclareBoldMathCommand \cU{{\cal U}}
\DeclareBoldMathCommand \cJ{{\cal J}}
\newcommand \cP {\mathbf{p}}
\DeclareBoldMathCommand \cQ{{\cal Q}}
\DeclareBoldMathCommand \cC{{\cal C}}
\newcommand \cW {\mathbf{w}}
\DeclareBoldMathCommand \cI{{\cal I}}
\newcommand \bfw {\mathbf{w}}
\newcommand \bfp {\mathbf{p}}
\newcommand \bfc {\mathbf{c}}
\newcommand \bfx {\mathbf{x}}
\newcommand \bfA {\mathbf{A}}
\newcommand \bfB {\mathbf{B}}
\DeclareBoldMathCommand \cItil{{\cal \widetilde{I}}}
\DeclareBoldMathCommand \cAtil{{\cal \widetilde{A}}}
\DeclareBoldMathCommand \cQtil{{\cal \widetilde{Q}}}
\newcommand \bfxhat {\mathbf{\widehat{x}}}
\newcommand \bfchat {\mathbf{\widehat{c}}}
\newcommand \xhat {{\widehat{x}}}
\newcommand \bSpl {{P}}
\newcommand \vihat {\widehat{v}_{\mathrm i}}
\newcommand \degree {{\bar{p}}}
\newcommand \vibar {\bar{v}_{\mathrm i}}
\newcommand \vCbarDesired {\bar{v}_{\mathrm{C},\text{desired}}}
\newcommand \Np {N_{\mathrm p}}
\newcommand \Ns {N_{\mathrm s}}
\newcommand \Ts {T_{\mathrm s}}
\newcommand \fs {f_\mathrm{s}}
\newcommand \iL {i_\mathrm{L}}
\newcommand \vC {v_{\mathrm C}}
\newcommand \vi {v_{\mathrm i}}
\newcommand \du {{d}} 
\newcommand \Tac {{T_{\mathrm{ac}}}}
\newcommand \fac {{f_{\mathrm{ac}}}}
\newcommand \Ltwo {{\mathrm{L}^2}}
\title{Efficient simulation of DC-AC power converters using Multirate Partial Differential Equations}
\author[1,2,3]{Andreas Pels}
\author[3]{Ruth V. Sabariego}
\author[1,2]{Sebastian Schöps}
{
  \affil[1]{\small Graduate School of Computational Engineering, Technische Universität Darmstadt, Dolivostraße 15, 64293 Darmstadt, Germany}
  \affil[2]{Institut für Teilchenbeschleunigung und Elektromagnetische Felder, Technische Universit\"at Darmstadt, Schlossgartenstraße 8, 64289 Darmstadt, Germany}
  \affil[3]{Department of Electrical Engineering, EnergyVille, KU Leuven, Kasteelpark Arenberg 10, 3001 Leuven, Belgium}
}
\date{}
\begin{document}
\maketitle

\begin{abstract}
  Switch-mode power converters are used in various applications to convert between different voltage (or current) levels. They use transistors to switch on and off the input voltage to generate a pulsed voltage whose arithmetic average is the desired output voltage of the converter. After smoothening by filters, the converter output is used to supply devices. The simulation of these switch-mode power converters by conventional time discretization is computationally expensive since a high number of time steps is necessary to properly resolve the unknown state variables and detect switch events of the excitation. This paper proposes a multirate method based on the concept of Multirate Partial Differential Equations (MPDEs), which splits the solution into fast varying and slowly varying parts. The method is developed to work with pulse width modulated (PWM) excitation with a constant switching cycle and varying duty cycle. The important case of varying duty cycles in the MPDE framework is adressed for the first time. Switching event detection is no longer necessary and a much smaller number of time steps for a decent resolution are required, thus leading to a highly efficent method.
\end{abstract}

\section{Introduction}\label{sec1}
Switch-mode power converters play nowadays a vital role in various applications \cite{Mohan_2003aa, Vasca_2012aa}. They are used in domestic devices, e.g., in mobile phone chargers and computer power supply, as well as in industrial applications, e.g., in high-voltage DC (HVDC) power transfer and speed control of electrical motors, to convert voltage (or current) between different levels. Switch-mode power converters use transistors to periodically switch on and off the input voltage to generate a pulsed voltage, whose arithmetic average is the desired output of the converter. The pulsed voltage is smoothened by filter circuits, which also act as energy buffer before it is used to supply the appliance.

An exemplary circuit of such a switch-mode power converter is depicted in Fig.~\ref{fig:buckConverterAndSol} along with its solution in DC-DC conversion mode. As can be seen the converter output voltage, even after filtering, still comprises of high-frequency components generated by the pulsed excitation, which are usually referred to as ripples \cite{Vasca_2012aa}. Since the circuit is uncharged at the beginning, the converter needs some time to reach the steady state. As a result, the solution consists of a slowly varying envelope which is modulated with the fast periodically varying ripples. The method used to generate the control signals for the transistors is called pulse width modulation (PWM). There are different kinds of PWM \cite{Vasca_2012aa}, e.g., natural sampling with different carrier signals like sawtooth and triangle, and regular sampling. Since in this paper, we focus on natural sampling with a sawtooth carrier, we refer the interested reader to Vasca et al. \cite{Vasca_2012aa} for more details on other modulations. Using natural sampling with a sawtooth carrier (trailing edge) leads to a pulsed signal as depicted in Fig.~\ref{fig:pulsedExcitation}, where the switching (pulse) period $\Ts=1/\fs$ and the duty cycle $\du$ are the quantities defining the pulses. The switching frequency is constant while the duty cycle varies with time. 

\begin{figure}
  \centering
  \begin{subfigure}[c]{1\textwidth}
    \centering
    \includegraphics{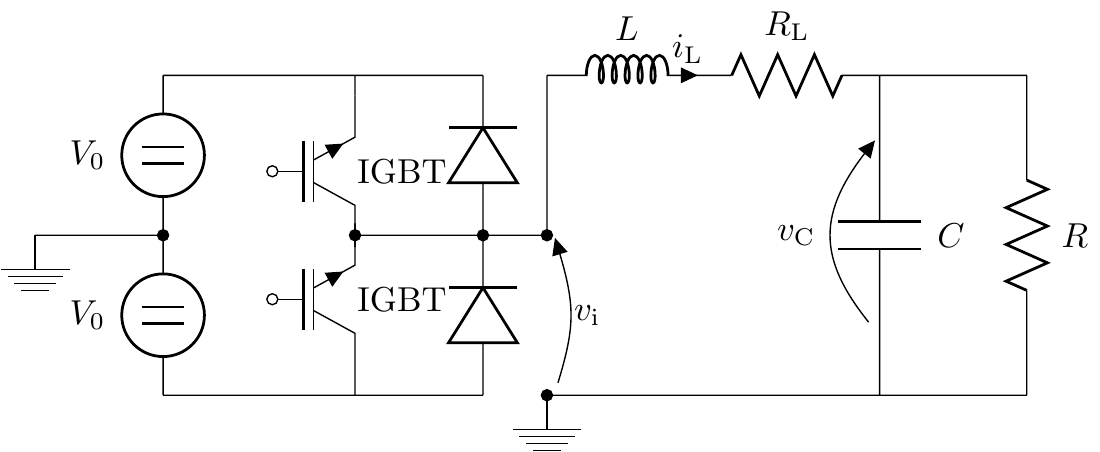}
  \end{subfigure}
  
  \vspace{1em}
  \begin{subfigure}[c]{1\textwidth}
    \centering
    \includegraphics{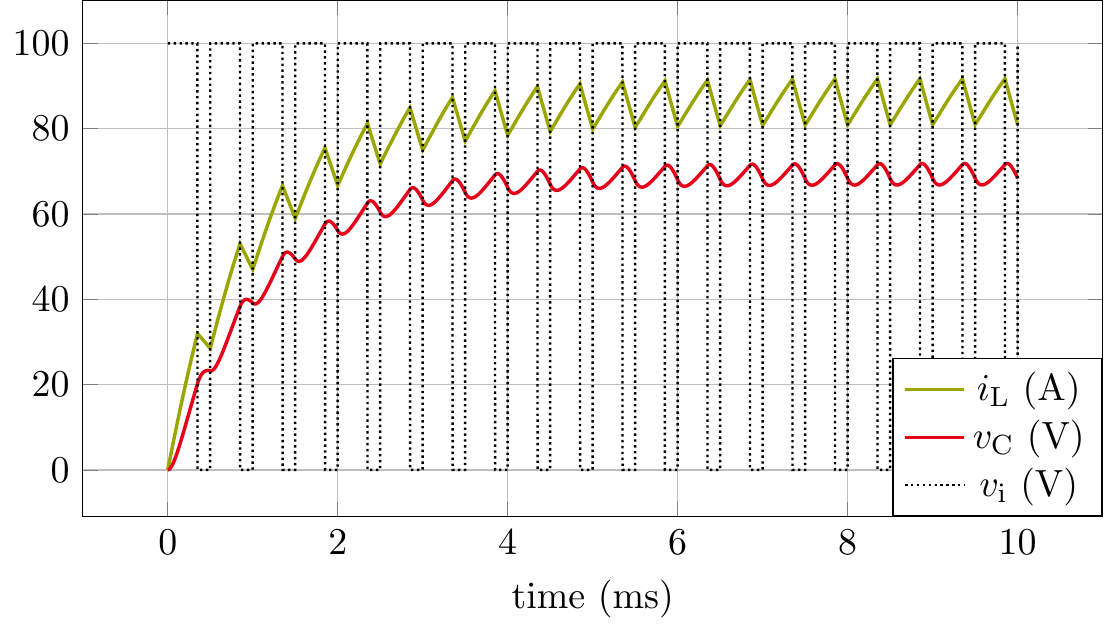}
  \end{subfigure}
  \caption{Exemplary circuit of a switch-mode power converter and its solution at a switching frequency of $\fs=2\,$kHz and constant duty cycle $\du=0.7$.}
  \label{fig:buckConverterAndSol}
\end{figure}

\begin{figure}
  \centering
  \includegraphics{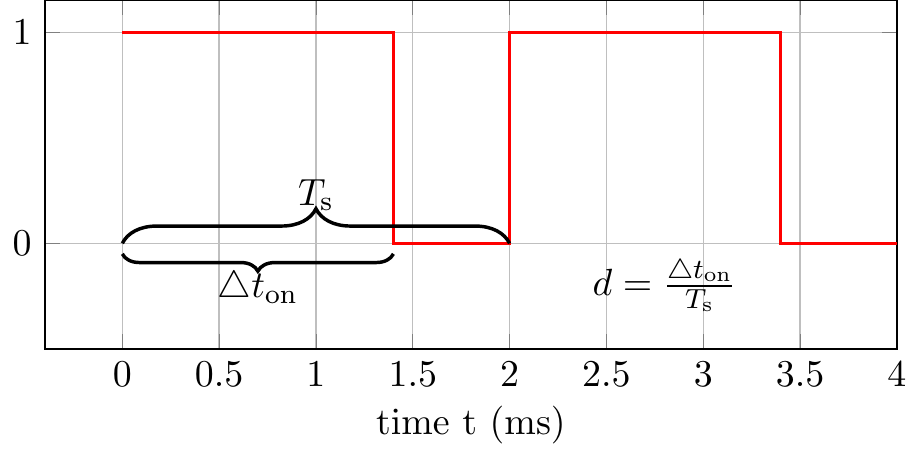}
  \caption{Pulsed excitation and the quantities defining it: switching period $\Ts$, on interval $\triangle t_\mathrm{on}$, duty cycle $\du$.}
  \label{fig:pulsedExcitation}
\end{figure}

Large-scale simulations of power converters require simplifications, e.g., ideal switching behaviour of the transistors and ideal diode \cite{Mohan_2003aa}. Nonetheless, the simulation of power converters with conventional time discretization is still computationally expensive since a high number of time steps is necessary to properly resolve the ripples in the solution. Special techniques are commonly applied to detect switching events and up to now most software uses techniques which reduces the order of the applied time discretization to lowest order, see Tant et al. \cite{Tant_2018aa}. In this paper we develop a method which alleviates the need of high number of time steps by splitting the solution into fast and slowly varying parts. The resulting method is a (high-order) multirate approach, which is based on a concept called Multirate Partial Differential Equations (MPDEs) \cite{Brachtendorf_1996aa, Roychowdhury_2001aa}. For DC-DC power converters the method has already been proposed in earlier papers by Pels et al. \cite{Pels_2018aa, Pels_2017ad}. We extend the concept to DC-AC power converters (also called inverters), in the following. In the process we restrict ourselves to power converter circuits which can be described by linear electrical elements and thus by a system of linear ordinary differential equations (ODEs). An extension to nonlinear problems can be achieved similarly as described in Pels et al. \cite{Pels_2018aa}. The method is verified using the example of the converter in Fig.~\ref{fig:buckConverterAndSol}, which is referred to as inverter in the following since it is operated in DC-AC mode. The main contribution is the efficient treatment of (sinusoidally) varying duty cycle. 

The paper is structured as follows. Section \ref{sec2} introduces the concept of MPDEs and their relation to the original ODEs describing the application. The focus of Section \ref{sec3} lies on the modeling of the DC-AC converters using MPDEs. It describes how an efficient simulation can be achieved. The methods used for solving the MPDEs, i.e., a Galerkin approach for fast periodic parts of the solution and a conventional time discretization for the slowly varying parts are presented. Section \ref{sec4} introduces B-spline basis functions used for the solution expansion and discusses their properties. It is shown that the basis functions are well suited for use with the proposed method since they allow to exploit smoothness almost everywhere but still lead to a cheap assembly of the arising matrices. Finally in Section \ref{sec5}, the method is applied to the example of the inverter and the computational efficiency is discussed. Section \ref{sec6} concludes the paper. 

\section{Introduction to Multirate Partial Differential Equations}\label{sec2}
Let the mathematical model of the converter be described by linear ordinary differential equations in the form
\begin{equation}
  \bfA \frac{\dd }{\dd t} \bfx(t) + \bfB \bfx(t) = \bfc(t), \quad \bfx(0)=\bfx_0,
  \label{equ:origODEs}
\end{equation}
where $\bfA, \bfB \in \mathbb{R}^{\Ns \times \Ns}$ are matrices, which depend on the topology of the circuit and the electrical elements, $\bfx(t) \in \mathbb{R}^{\Ns}$ is the unknown solution consisting of voltages and currents, $\bfx_0$ is the vector of initial conditions and $\bfc(t)\in \mathbb{R}^{\Ns}$ is the excitation of the circuit.   

To obtain the multirate formulation, two artifical time scales $t_1$ and $t_2$ are introduced and \eqref{equ:origODEs} is rewritten in terms of the corresponding multivariate solution $\bfxhat(t_1,t_2)$ and excitation $\bfchat(t_1,t_2)$ which yields the MPDEs \cite{Brachtendorf_1996aa, Roychowdhury_2001aa}
\begin{equation}
  \bfA \left(\frac{\partial \bfxhat(t_1,t_2)}{\partial t_1} + \frac{\partial \bfxhat(t_1,t_2)}{\partial t_2}\right) + \bfB \bfxhat(t_1, t_2) = \bfchat(t_1,t_2).
  \label{equ:origMPDEs}
\end{equation}
The relation between the ODEs \eqref{equ:origODEs} and MPDEs \eqref{equ:origMPDEs} is given by \cite{Brachtendorf_1996aa, Roychowdhury_2001aa}
\begin{equation}
  \begin{aligned}
    \bfxhat(t, t)&=\bfx(t) \\
    \bfchat(t, t)&=\bfc(t).
  \end{aligned}
  \label{equ:relationMPDEsODEs}
\end{equation}
Therefore, if any right-hand side $\bfchat(t_1,t_2)$ can be found which fulfills the relation $\bfchat(t,t)=\bfc(t)$, then the solution of the original system of ODEs can be extracted from the solution of the MPDEs by using $\bfx(t)=\bfxhat(t,t)$, i.e., evaluating the multivariate solution along a diagonal line through the computational domain. It is important to note that the simulation efficiency depends on the choice of $\bfchat(t_1,t_2)$ and of course the methods which are used for solving the MPDEs. In our case, we require fast changes to occur along the ``fast time scale'' $t_2$ and slow changes along the ``slow time scale'' $t_1$. The multivariate right-hand side $\bfchat(t_1,t_2)$ has then to be chosen appropriately, which depends on the application and is detailed in Section \ref{sec5}. 

\section{Multirate Modeling of DC-AC converters}\label{sec3}
The fast varying components, i.e., the ripples in the solution of the power converters, are assumed to be representable by basis functions $p_k(\tau(t_2), \du(t_1))$, which depend on the fast time scale $t_2$ and on the duty cycle $\du(t_1)$ assumed to be slowly varying. The first assumption is necessary for the convergence of the method but obvious for most classical bases. The second assumption on the dynamics of the duty cycle is not crucial but will be important to obtain an efficient method. The basis functions are periodic (with period $\Ts$), which is realized using the function $\tau(t_2) = \frac{t_2}{\Ts}\text{ modulo } 1$, also called the relative time. The multivariate solution is expanded into the basis functions and slowly varying coefficients $w_{j,k}(t_1)$ yielding \cite{Pels_2017ad,Pels_2018aa,Gyselinck_2013ab}
\begin{equation}
  \xhat_j(t_1, t_2)\doteq\sum\limits_{k=0}^{\Np} w_{j,k}(t_1) p_k(\tau(t_2), \du(t_1)) = \bfp^\top\!(\tau(t_2), \du(t_1)) \bfw_j(t_1), 
  \label{equ:solExp}
\end{equation}
where 
\begin{equation}
  \cP(\tau) =  
  \left[\begin{array}{c} p_0 \\ p_1(\tau) \\ p_2(\tau) \\ \vdots \\  p_{N_{\mathrm p}}(\tau)   \end{array}\right] \ , \ \
  \quad
  \cW_j(t_1) \ = \ 
  \left[\begin{array}{c} w_{j,0}(t_1) \\ w_{j,1}(t_1) \\ w_{j,2}(t_1) \\ \vdots \\  w_{j,\Np}(t_1)   \end{array}\right]
  \, .
  \vspace{1em}
\end{equation}
Inserting the solution expansion into the partial derivatives from \eqref{equ:origMPDEs} leads to
\begin{align}
  \frac{\partial \xhat_j(t_1, t_2)}{\partial t_1}+\frac{\partial \xhat_j(t_1, t_2)}{\partial t_2} &= \frac{\partial \bfp^\top\!(\tau(t_2), \du)}{\partial \, \du} \frac{\dd \, \du(t_1)}{\dd t_1} \bfw_j(t_1) \nonumber \\
  &+ \bfp^\top\!(\tau(t_2), \du(t_1)) \frac{\dd \bfw_j(t_1)}{\dd t_1} + \frac{\partial \bfp^\top\!(\tau(t_2), \du(t_1))}{\partial \tau} \frac{\dd \tau(t_2)}{\dd t_2} \bfw_j(t_1),
  \label{equ:partDerivExtended}
\end{align}

To solve the MPDEs, two methods are applied, one method for each time scale. To resolve the fast changes represented by the basis functions $p_k$, a Galerkin approach is applied. The slowly varying coefficients are solved afterwards with a conventional time discretization. Applying the Ritz-Galerkin approach to the MPDEs \eqref{equ:origMPDEs} with respect to $t_2$ and on the interval of periodicity $[0,\Ts)$ yields 
\begin{align}
  &\int\limits_0^{\Ts} \left(\bfA \left(\frac{\partial \bfxhat(t_1,t_2)}{\partial t_1} + \frac{\partial \bfxhat(t_1,t_2)}{\partial t_2}\right) + \bfB \bfxhat(t_1, t_2)\right) p_l(\tau(t_2), \du) \dd t_2 \nonumber \\
  &= \int\limits_0^{\Ts} \bfchat(t_1,t_2) p_l(\tau(t_2), \du) \dd t_2 \quad \forall l=0,\dots,\Np.
\end{align}
Using integration by parts gives (function arguments are omitted in the following for the sake of readability)
\begin{align}
  \int\limits_0^{\Ts} \bfA \frac{\partial \bfxhat(t_1,t_2)}{\partial t_1} &+ \bfB \bfxhat(t_1, t_2) \, p_l \, \dd t_2 - \int\limits_0^{\Ts} \bfA \bfxhat(t_1,t_2)\frac{\partial p_l}{\partial \tau} \frac{\dd \tau}{\dd t_2} \dd t_2 \nonumber \\
  &+ \Big(\bfxhat(t_1,t_2) p_l \Big)\Big|_{t_2=0}^{t_2=\Ts} 
  = \int\limits_0^{\Ts} \bfchat(t_1,t_2) \, p_l \, \dd t_2 \qquad \forall l=0,\dots,\Np.
\end{align}
Since the solution is periodic with respect to $t_2$ in the interval $[0,\Ts)$, the boundary term vanishes. 
Inserting \eqref{equ:partDerivExtended} and substituting $t_2$ with $\tau$ leads to
\begin{equation}
  \begin{aligned}
    &\int\limits_0^{1} \left(\bfA 
    \left[\begin{array}{c} 
      \frac{\partial \bfp^\top}{\partial \, \du} \frac{\dd \, \du}{\dd t_1} \bfw_1 \\
      \vdots \\
      \frac{\partial \bfp^\top}{\partial \, \du} \frac{\dd \, \du}{\dd t_1} \bfw_{\Ns}
    \end{array}\right]
    + \bfA 
    \left[\begin{array}{c} 
      \bfp^\top \frac{\dd \bfw_1}{\dd t_1} \\
      \vdots \\
      \bfp^\top \frac{\dd \bfw_{\Ns}}{\dd t_1}
    \end{array}\right]
    + 
    \bfB 
    \left[\begin{array}{c} 
      \bfp^\top \bfw_1 \\
      \vdots \\
      \bfp^\top \bfw_{\Ns}
    \end{array}\right]
    \right) p_l \, \Ts \dd \tau \\
    &-
    \int\limits_0^{1} \left(\bfA 
    \left[\begin{array}{c} 
      \bfp^\top \bfw_1 \\
      \vdots \\
      \bfp^\top \bfw_{\Ns}
    \end{array}\right] \right) \frac{\partial p_l}{\partial \tau} \underbrace{\frac{1}{\Ts} \,  \Ts}_{=1} \dd \tau 
    = 
    \int\limits_0^{\Ts} \bfchat(t_1,t_2) p_l \, \dd t_2 \qquad \forall l=0,\dots,\Np.
  \end{aligned}
\end{equation}
Assembling all equations finally yields a linear system of ODEs with time-varying coefficients
\begin{equation}
  \cA(t_1) \, \frac{\dd  \bfw}{\dd t_1} + \cB(t_1) \, \bfw(t_1) \ = \ \cC(t_1) \, ,
  \label{equ:reducedMPDEs}
\end{equation}
where 
\begin{align}
  \cA(t_1) &=\bfA\otimes\cI(\du(t_1)), \quad \cB(t_1)=\bfB\otimes\cI(\du(t_1))+\bfA\otimes\cQ(\du(t_1)) + \frac{\dd \, \du(t_1)}{\dd t_1} \bfA\otimes\cU(\du(t_1)),\\
  \cC(t_1)&=\int\limits_{0}^{T_\s}
  \bfchat(t_1,t_2)  \otimes \cP(\tau(t_2), \du(t_1))  \, \dd t_2 \, .
  \label{equ:integralC}
\end{align}
and 
\begin{align}
  \cI(\du)  \ &= \ \Ts \int\limits_0^1  \cP(\tau, \du) \, \cP^\top \!(\tau, \du)  \,\dd \tau \, , \label{equ:matI} \\
  \cQ(\du) \ &= \ - \int\limits_0^1  \frac{\partial \cP(\tau, \du)}{\partial \tau } \, \cP^\top(\tau, \du)  \,  \dd \tau  \, , \label{equ:matQ} \\
  \cU(\du) \ &= \ \Ts \,  \int\limits_0^1  \cP(\tau, \du) \, \frac{\partial \bfp^\top\!(\tau, \du)}{\partial \, \du}  \,\dd \tau \, , \label{equ:matU}
\end{align}
The equation system \eqref{equ:reducedMPDEs} can be solved using conventional time discretization with much larger time steps than for the original problem, since fast varying changes are already taken into account by the Galerkin approach and only the envelope is resolved. A disadvantage of \eqref{equ:reducedMPDEs} is the larger equation systems, which are $\Np+1$ times larger than the original ones \eqref{equ:origODEs}. 

To solve the system \eqref{equ:reducedMPDEs} numerically, initial values have to be specified. To achieve an efficient simulation the following choice has proven advantageous: 
\begin{enumerate}
  \item Calculate the steady-state of the system \eqref{equ:reducedMPDEs}, i.e., $\left[\begin{array}{ccc} \bfw^{s}_1(0) & \dots & \bfw^{s}_{\Ns}(0)\end{array}\right]^\top
=\cB^{-1}(0) \cC(0)$ and use the solution expansion $\xhat^{s}_j(0,t_2)=\bfp^\top\!(\tau(t_2), \du(0)) \, \bfw^{s}_j(0)$, $j=1,\dots,\Ns$ to reconstruct the solution (ripple).
\item Since the reconstructed solution in steady-state does (usually) not fulfill the initial condition of the original ODEs \eqref{equ:origODEs}, i.e., $\bfxhat^{s}(0,0)\neq \bfx_0$, it has to be shifted by a constant $\bfc^\mathrm{s}$ as such that $\bfxhat(0,0)=\bfxhat^{s}(0,0)-\bfc^\mathrm{s}=\bfx_0$. The shift is accomplished by modifying the coefficients $\bfw^{s}(0)$. In case of B-splines, due to partition of unity, this is achieved by choosing the coefficients $\bfw_j(0)=\bfw^{s}_j(0)-c^{\mathrm{s}}_j$ as initial values for \eqref{equ:reducedMPDEs}, where $c^{\mathrm{s}}_j$ is the $j$-th component of $\bfc^\mathrm{s}$. This corresponds to the zero initial conditions as we use them for the original set of ODEs \eqref{equ:origODEs}. 
\end{enumerate}
Other choices may still lead to the correct solution but may require higher effort from the time discretization algortihm.

\section{Choice of basis functions}\label{sec4}
For the solution expansion \eqref{equ:solExp} B-spline basis functions are employed. They allow a high-order basis while still capturing the $C^0$ continuity as it appears in the current ripples by construction. In the literature splines have also been used in the context of MPDE methods but for high-frequency problems. Brachtendorf et al. \cite{Brachtendorf_2009aa} for instant use cubic and exponential splines, which lead to a better approximation than Fourier basis functions for steep transients, while still enabling a simple extraction of the frequency spectrum and Bittner et al. \cite{Bittner_2014aa} use an adaptive spline-wavelet to approximate solutions with steep transients. Both do not deal with $C^0$ solutions by construction. In contrast we take advantage of the a-priori knowledge of the excitation switching instant to contruct appropriate basis functions for the solution representation. 

\subsection{Introduction to B-splines}
In this subsection we briefly introduce the most important properties and definitions concerning B-splines for this work. The information is taken from Piegl et al. \cite{Piegl_1997aa}, to which the interested reader is referred for more details.

To define B-splines basis functions we first setup a knot vector $\Xi=\{\xi_0, \dots, \xi_m\}$, which is sorted in ascending order, i.e., $\xi_i \leq \xi_{i+1}$, $i=0,\dots, m-1$. 
The basis functions of degree $\degree$ are now build up recursively from the piecewise constant basis function $\degree=0$
\begin{equation}
  \bSpl_{i,0}(\xi)=
  \left\{
  \begin{array}{cc}
    1 & \text{for } \xi_i \leq \xi < \xi_{i+1} \\
    0 & \text{otherwise}
  \end{array}
  \right.
\end{equation}
by the Cox-DeBoor recurrence formula
\begin{equation}
  \bSpl_{i,\degree}(\xi)=\frac{\xi-\xi_i}{\xi_{i+\degree}-\xi_i} \bSpl_{i,\degree-1}(\xi) + \frac{\xi_{i+\degree+1}-\xi}{\xi_{i+\degree+1}-\xi_{i+1}} \bSpl_{i+1,\degree-1}(\xi).
  \label{equ:coxDeBoor}
\end{equation} 
For our purposes we assume the knot vector to be open (also called clamped), i.e., it is of the form
\begin{equation}
  \Xi=\{\underbrace{a,\dots,a}_{\degree+1}, \xi_{\degree+1}, \dots, \xi_{m-\degree-1}, \underbrace{b, \dots, b}_{\degree+1}\},
\end{equation}
where the first and the last knots appear $\degree+1$ times. 
The regularity $r_j$ of the B-spline basis functions across the knots, i.e., their continuity $C^{r_j}$ across the knot $\xi_j$, $j=0,\dots,m$, can be controlled using knot repetitions. The maximum regularity for degree $\degree$ B-splines is given by $r_{j,\mathrm{max}}=\degree-1$, which corresponds to knot repetitions of one for all knots except the knots at the boundary. To obtain less regularity, a knot repetition is introduced. Continuity $C^{r_j}$ is achieved by a knot repetition of $s_j=\degree-r_j$. For instant, to represent a $C^0$ continuity across knot $j$, the knot repetition is given by $s_j=\degree-0=\degree$. 

\subsection{Choice of knot vector}
For use in this work we define a knot vector as such, that the basis functions have a $C^0$ continuity at the point where the excitation changes its state. 
Defining the basis in terms of the relative time, i.e., $\tau\in[0,1]$, and the (fixed) duty cycle $\du\in(0,1)$ the simplest knot vector depending on the degree $\degree$ is given by 
\begin{equation}
  \Xi_{\degree}=\{\underbrace{0,\dots,0}_{\degree+1},\underbrace{\du,\dots,\du}_{\degree},\underbrace{1,\dots,1}_{\degree+1}\}.
\end{equation}
Additional knot refinement (corresponding to $h$-refinement in Finite Element Methods) leads to 
\begin{equation}
  \Xi_{\degree,K}=\{\underbrace{0,\dots,0}_{\degree+1},\alpha_1 \du, \dots, \alpha_K \du,\underbrace{\du,\dots,\du}_{\degree},\beta_1 (1-\du)+\du, \dots, \beta_K (1-\du)+\du,\underbrace{1,\dots,1}_{\degree+1}\}, 
  \label{equ:knotVectorRefined}
\end{equation}
where $K$ is the number of additional knots inserted before and after the $C^0$ continuity, and $\alpha_k \in [0,1]$, $\beta_k\in[0,1]$, $\forall k=1,\dots,K$ in ascending order. 

Using the refined knot vector \eqref{equ:knotVectorRefined} leads to the set of basis functions 
\begin{equation}
  \{\bSpl_{0,\degree}(\xi, \du), \dots,\bSpl_{2\degree+2K,\degree}(\xi, \du)\}
  \label{equ:BSplinesSpecific}
\end{equation}
as depicted exemplary for $K=1$ and $\degree=2$ in Fig.~\ref{fig:bSplines}, i.e., we have in total $2(\degree+K)+1$ basis functions. The basis functions for the solution expansion \eqref{equ:solExp} are finally given by
\begin{equation}
  p_k(\tau,\du)=\bSpl_{k,\degree}(\tau,\du)
\end{equation}
The periodicity of the set of basis functions is ensured using periodic boundary conditions in the implementation. 

Using the set of B-spline basis functions as introduced above leads to a cheap calculation of the matrices arising in the MPDE approach since their elements depend only linearly on the duty cycle. For a low-order basis like classical Finite Element hat functions (i.e., $\degree=1$) this is rather obvious whilst for higher order B-splines it is more involved due to their non-local support. Hence this property is analyzed in the following theorem.

\begin{figure}
  \centering
  \includegraphics{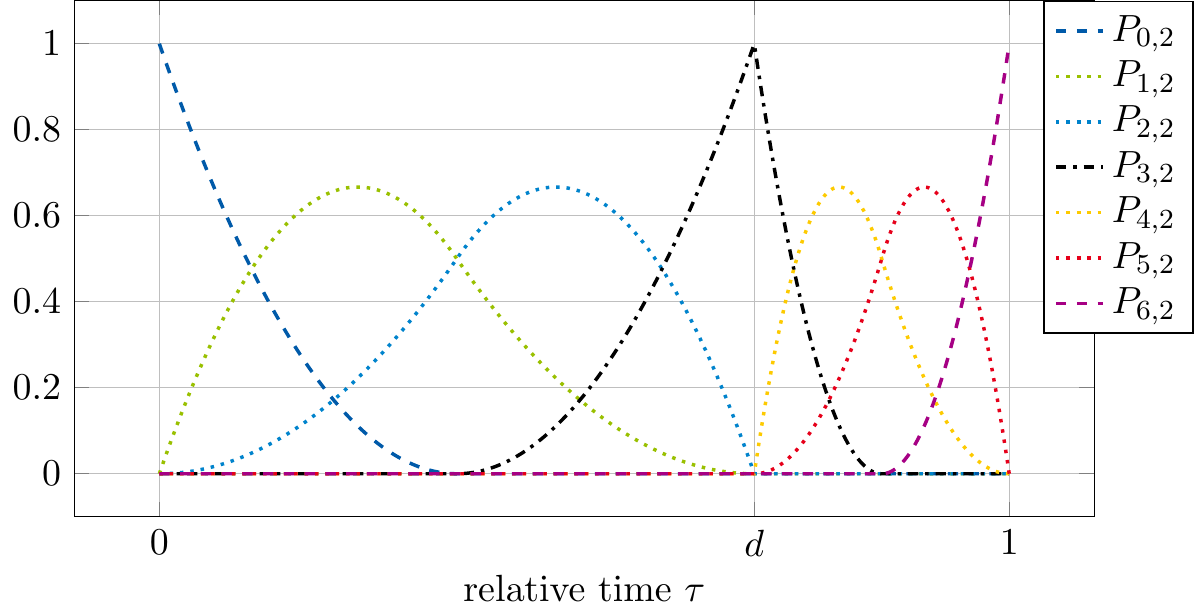}
  \caption{B-spline basis functions with degree $\degree=2$ and refinement factor $K=1$.}
  \label{fig:bSplines}
\end{figure}

\begin{theorem}[Dependency of the matrices $\cI, \cQ, \cU$ on the duty cycle]\label{the:depMat}
  Using the B-spline basis functions \eqref{equ:BSplinesSpecific}, only the matrix $\cI$ from \eqref{equ:matI} depends linearly on the duty cycle, i.e., $\cI(\du)=\frac{\cI(\du_0)-\cI(\du_1)}{\du_0-\du_1}(\du-\du_0)+\cI(\du_0)$, with $\du_0,\du_1 \in (0,1)$ and $\du_0 \neq \du_1$, and the matrices $\cQ, \cU$ from \eqref{equ:matQ}, \eqref{equ:matU}, respectively, are independent of the duty cycle.
\end{theorem}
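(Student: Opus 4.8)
The plan is to exploit that the refined knot vector \eqref{equ:knotVectorRefined} depends on $\du$ only through the two affine maps $\sigma\mapsto\alpha_k\du$ and $\sigma\mapsto\beta_k(1-\du)+\du$, so that after rescaling each of the two sub-intervals $[0,\du]$ and $[\du,1]$ onto the reference interval $[0,1]$ the basis functions no longer depend on $\du$. First I would record the following reparametrization lemma. Restricting the B-splines \eqref{equ:BSplinesSpecific} to the closed sub-interval $[0,\du]$ (resp.\ $[\du,1]$), the components that do not vanish there span the clamped degree-$\degree$ B-spline space on $[0,\du]$ with simple interior knots $\alpha_1\du,\dots,\alpha_K\du$ (resp.\ on $[\du,1]$ with interior knots $\beta_k(1-\du)+\du$); this is where the multiplicity $s_{\du}=\degree$ of the knot $\du$ is essential, since it is exactly what decouples the two sides and produces the $C^0$ kink. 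By affine invariance of the Cox--DeBoor recursion \eqref{equ:coxDeBoor} — the quotients $(\xi-\xi_i)/(\xi_{i+\degree}-\xi_i)$ are unchanged by an affine substitution — there are fixed vectors $\widehat\cP_{\rma}(\sigma)$ and $\widehat\cP_{\rmb}(\sigma)$ of B-splines on $[0,1]$, built from the $\du$-independent interior knots $\{\alpha_k\}$ and $\{\beta_k\}$ respectively (zero-padded in the remaining components), such that $\cP(\tau,\du)=\widehat\cP_{\rma}(\tau/\du)$ for $\tau\in[0,\du]$ and $\cP(\tau,\du)=\widehat\cP_{\rmb}\big((\tau-\du)/(1-\du)\big)$ for $\tau\in[\du,1]$. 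The periodic identification of the boundary basis functions is preserved under this rescaling as well, since it concerns the supports near $\tau=0$ and near $\tau=1$, which map to fixed positions in the two reference charts.

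Granting the lemma, all three matrices are evaluated by splitting $\int_0^1=\int_0^\du+\int_\du^1$ and substituting $\sigma=\tau/\du$ on the first piece and $\sigma=(\tau-\du)/(1-\du)$ on the second. For $\cI$ in \eqref{equ:matI} no inner derivative appears, so the Jacobians $\du$ and $1-\du$ survive and $\cI(\du)=\Ts\big(\du\,M_{\rma}+(1-\du)\,M_{\rmb}\big)$ with the constant matrices $M_{\rma}=\int_0^1\widehat\cP_{\rma}(\sigma)\widehat\cP_{\rma}^{\top}(\sigma)\,\dd\sigma$ and $M_{\rmb}=\int_0^1\widehat\cP_{\rmb}(\sigma)\widehat\cP_{\rmb}^{\top}(\sigma)\,\dd\sigma$; being affine in $\du$, it coincides with its own secant through any two nodes $\du_0\neq\du_1$ in $(0,1)$, which is exactly the asserted interpolation formula. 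For $\cQ$ in \eqref{equ:matQ}, $\partial\cP/\partial\tau$ produces a factor $1/\du$ (resp.\ $1/(1-\du)$) from the chain rule, which cancels the Jacobian, so each piece reduces to a constant matrix $\int_0^1\widehat\cP_{\rma}'(\sigma)\widehat\cP_{\rma}^{\top}(\sigma)\,\dd\sigma$ (resp.\ with $\rmb$) and $\cQ$ does not depend on $\du$. For $\cU$ in \eqref{equ:matU}, on $[0,\du]$ one has $\partial\cP/\partial\du=\widehat\cP_{\rma}'(\tau/\du)\,(-\tau/\du^2)$; after the substitution the factor $-\tau/\du^2$ becomes $-\sigma/\du$ and the Jacobian $\du$ cancels the remaining $1/\du$, leaving $-\int_0^1 \sigma\,\widehat\cP_{\rma}(\sigma)\big(\widehat\cP_{\rma}'(\sigma)\big)^{\top}\,\dd\sigma$, and analogously the $[\du,1]$ piece gives $\int_0^1(\sigma-1)\,\widehat\cP_{\rmb}(\sigma)\big(\widehat\cP_{\rmb}'(\sigma)\big)^{\top}\,\dd\sigma$; both are constant, so $\cU$ does not depend on $\du$ either.

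I expect the reparametrization lemma to be the only real obstacle; everything after it is three one-line changes of variables together with the observation that an affine function is its own secant. The care needed in the lemma is to argue cleanly that each basis function whose support straddles $\du$ restricts on either side to a genuine element of the correspondingly scaled local clamped B-spline space — a short dimension-count (or knot-insertion) argument that again hinges on the multiplicity $s_{\du}=\degree$ — and to confirm that the periodic wrap of the boundary functions is compatible with the two reference charts. A minor point worth stating is that the integrands of $\cQ$ and $\cU$ are piecewise polynomial with at worst a jump at $\tau=\du$, hence integrable, so splitting the integral there is legitimate.
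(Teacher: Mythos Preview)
Your argument is correct and is essentially the same approach as the paper's: both split the integral at $\tau=\du$, use that on each sub-interval the basis functions depend on $\du$ only through the affine rescaling $\tau\mapsto\tau/\du$ resp.\ $\tau\mapsto(\tau-\du)/(1-\du)$, and then track how the Jacobian of the substitution combines with the inner derivatives in $\cQ$ and $\cU$. The only stylistic difference is that the paper verifies your reparametrization lemma by writing each $\bSpl_{i,\degree}$ explicitly as a polynomial in $\xi/\du$ (resp.\ in $\xi/(1-\du)$ after a shift) via the Cox--DeBoor recursion, whereas you invoke affine invariance of \eqref{equ:coxDeBoor} directly; the content and the three ensuing change-of-variable computations are identical.
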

\begin{proof}
  The set of basis functions \eqref{equ:BSplinesSpecific} can be split into three parts: The basis functions which are supported on the knots in the interval $[0,\du]$, i.e., the basis functions $\{\bSpl_{0,\degree}, \dots, \bSpl_{\degree+K-1,\degree}\}$; the basis functions on the interval $[d,1]$, i.e., $\{\bSpl_{\degree+K+1,\degree}, \dots, \bSpl_{2\degree+2K,\degree}\}$; and the remaining $C^0$ basis function which depends on knots from the entire interval $[0,1]$ , i.e. $\{\bSpl_{\degree+K,\degree}\}$.
  
  \begin{enumerate}
    \item Calculating the basis functions $\{\bSpl_{0,\degree}, \dots, \bSpl_{\degree+K-1,\degree}\}$ using the Cox-DeBoor formula leads to a recursion of the form
    \begin{equation}
      \bSpl_{i,\degree}(\xi,\du)=\frac{\xi-\gamma_{1,i} \, \du}{\gamma_{2,i} \, \du} \bSpl_{i,\degree-1}(\xi,\du) + \frac{\gamma_{3,i}\, \du-\xi}{\gamma_{4,i} \, \du} \bSpl_{i+1,\degree-1}(\xi,\du),
    \end{equation} 
    where $\gamma_{1,i}, \dots, \gamma_{4,i}$ are constants depending on $\alpha_1,\dots,\alpha_K$, but independent of the duty cycle $\du$. 
    Therefore the resulting polynomials $\bSpl_{i,\degree}(\xi)$ can be written as 
    \begin{equation}
      \bSpl_{i,\degree}(\xi,\du) = \bar{\gamma}_{\degree,i} \frac{\xi^{\degree}}{\du^{\degree}} + \bar{\gamma}_{\degree-1,i} \frac{\xi^{\degree-1}}{\du^{\degree-1}} + \dots + \bar{\gamma}_{0,i} \frac{\xi^{0}}{\du^{0}}
      \label{equ:polFirstPart}
    \end{equation}
    
    \item To calculate the B-splines $\{\bSpl_{\degree+K+1,\degree}, \dots, \bSpl_{2\degree+2K,\degree}\}$, we first redefine the knot vector for ease of notation by shifting it
    \begin{equation}
      \widehat{\Xi}_{\degree,K}=\Xi_{\degree,K}-\du=\{-\du,\dots,\du,\alpha_1 \du-\du, \dots, \alpha_K \du-\du,0,\dots,0,\beta_1 (1-\du), \dots, \beta_K (1-\du),\du,\dots,\du\}.
    \end{equation}
    The resulting B-splines are the same as with the original knot vector $\Xi_{\degree,K}$ but shifted by $-\du$. Using the Cox-DeBoor formula leads to 
    \begin{equation}
      \widehat{\bSpl}_{i,\degree}(\xi,\du)=\frac{\xi-\chi_{1,i} \, (1-\du)}{\chi_{2,i} \, (1-\du)} \widehat{\bSpl}_{i,\degree-1}(\xi,\du) + \frac{\chi_{3,i}\, (1-\du)-\xi}{\chi_{4,i} \, (1-\du)} \widehat{\bSpl}_{i+1,\degree-1}(\xi,\du),
    \end{equation} 
    The final polynomials $\widehat{\bSpl}_{i,\degree}(\xi,\du)$ can be written as
    \begin{equation}
      \widehat{\bSpl}_{i,\degree}(\xi,\du) = \bar{\chi}_{\degree,i} \frac{\xi^{\degree}}{(1-\du)^{\degree}} + \bar{\chi}_{\degree-1,i} \frac{\xi^{\degree-1}}{(1-\du)^{\degree-1}} + \dots + \bar{\chi}_{0,i} \frac{\xi^{0}}{(1-\du)^{0}}
      \label{equ:polSecondPart}
    \end{equation}
    
    \item The single basis function $\bSpl_{\degree+K,\degree}$ consists of two terms due to the knot repetition. In the Cox-DeBoor formula, the first term in the sum stems from basis function in the interval $[0,d]$, the second term stems from basis functions in the interval $[1,d]$. Therefore as well as for the other basis functions, the left term can be written as a polynomial of the form \eqref{equ:polFirstPart}, the right term can be written as a polynomial of the form \eqref{equ:polSecondPart}.
  \end{enumerate}
  Using the above knowlegde, the matrix \eqref{equ:matI} is of the form
  \begin{equation}
    \cI_{i,j}(d) = 
    \underbrace{\Ts \int\limits_0^d  \bSpl_{i,\degree}(\xi,\du) \bSpl_{j,\degree}(\xi,\du)\, \dd \xi}_{\cI_{i,j}^0 \, \du} 
    + \underbrace{\Ts \int\limits_0^{1-d} \widehat{\bSpl}_{i,\degree}(\xi,\du) \widehat{\bSpl}_{j,\degree}(\xi,\du) \dd \xi}_{\cI_{i,j}^1 (1-\du)} 
  \end{equation}
  where $\cI_{i,j}^0, \cI_{i,j}^1$ are constants. Therefore the matrix $\cI$ depends only linearly on the duty cycle.
  
  The matrices \eqref{equ:matQ} and \eqref{equ:matU} are given by
  \begin{equation}
    \cQ_{i,j} =  \underbrace{\int\limits_0^d \frac{\partial \bSpl_{i,\degree}(\xi,\du)}{\partial \xi} \bSpl_{j,\degree}(\xi,\du)\, \dd \xi}_{\cQ_{i,j}^0} 
    + \underbrace{\int\limits_0^{1-d} \frac{\partial \widehat{\bSpl}_{i,\degree}(\xi,\du)}{\partial \xi} \widehat{\bSpl}_{j,\degree}(\xi,\du) \dd \xi}_{\cQ_{i,j}^1} 
  \end{equation}
  and 
  \begin{equation}
    \cU_{i,j} =  \underbrace{\int\limits_0^d \bSpl_{i,\degree}(\xi,\du) \frac{\partial \bSpl_{j,\degree}(\xi,\du)}{\partial \du}\, \dd \xi}_{\cU_{i,j}^0} 
    +  \underbrace{\int\limits_0^{1-d} \widehat{\bSpl}_{i,\degree}(\xi,\du) \frac{\partial \widehat{\bSpl}_{j,\degree}(\xi,\du)}{\partial \du} \dd \xi}_{\cU_{i,j}^1} 
  \end{equation}
  where $\cQ_{i,j}^0$, $\cQ_{i,j}^1$, $\cU_{i,j}^0$ and $\cU_{i,j}^1$ are constants. The matrices are thus independent of the duty cycle $\du$.
\end{proof}
This is a helpful result since it allows a cheap calculation of the matrices $\cI$, $\cQ$ and $\cU$ for different duty cycles and time steps.

\section{Numerical results}\label{sec5}
To verify the proposed method and measure its efficiency, we test the method on the inverter example as depicted in Fig.~\ref{fig:buckConverterAndSol}. 
To measure the accuracy of solutions they are compared to a reference solution calculated in Simulink using PLECS. In the following we use three simulation setups: 1.) The MPDE approach; 2.) A conventional time discretization with switch detection implemented in MATLAB; 3.) A simulation in Simulink using PLECS.
The conventional approaches 2.) and 3.) exploit a-priori knowledge on discontinuities, e.g. by a pre-defined event function.

\begin{figure}
  \begin{subfigure}[c]{1\textwidth}
    \centering
    \includegraphics{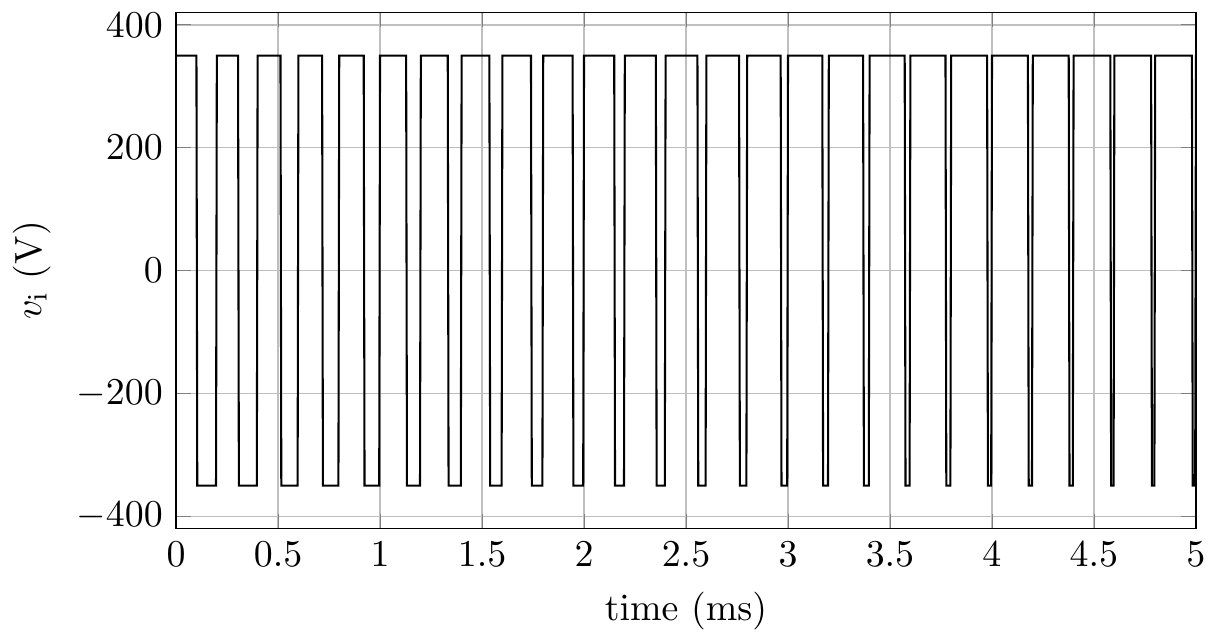}
  \end{subfigure}
  
  \vspace{2em}
  \begin{subfigure}[c]{1\textwidth}
    \centering
    \includegraphics{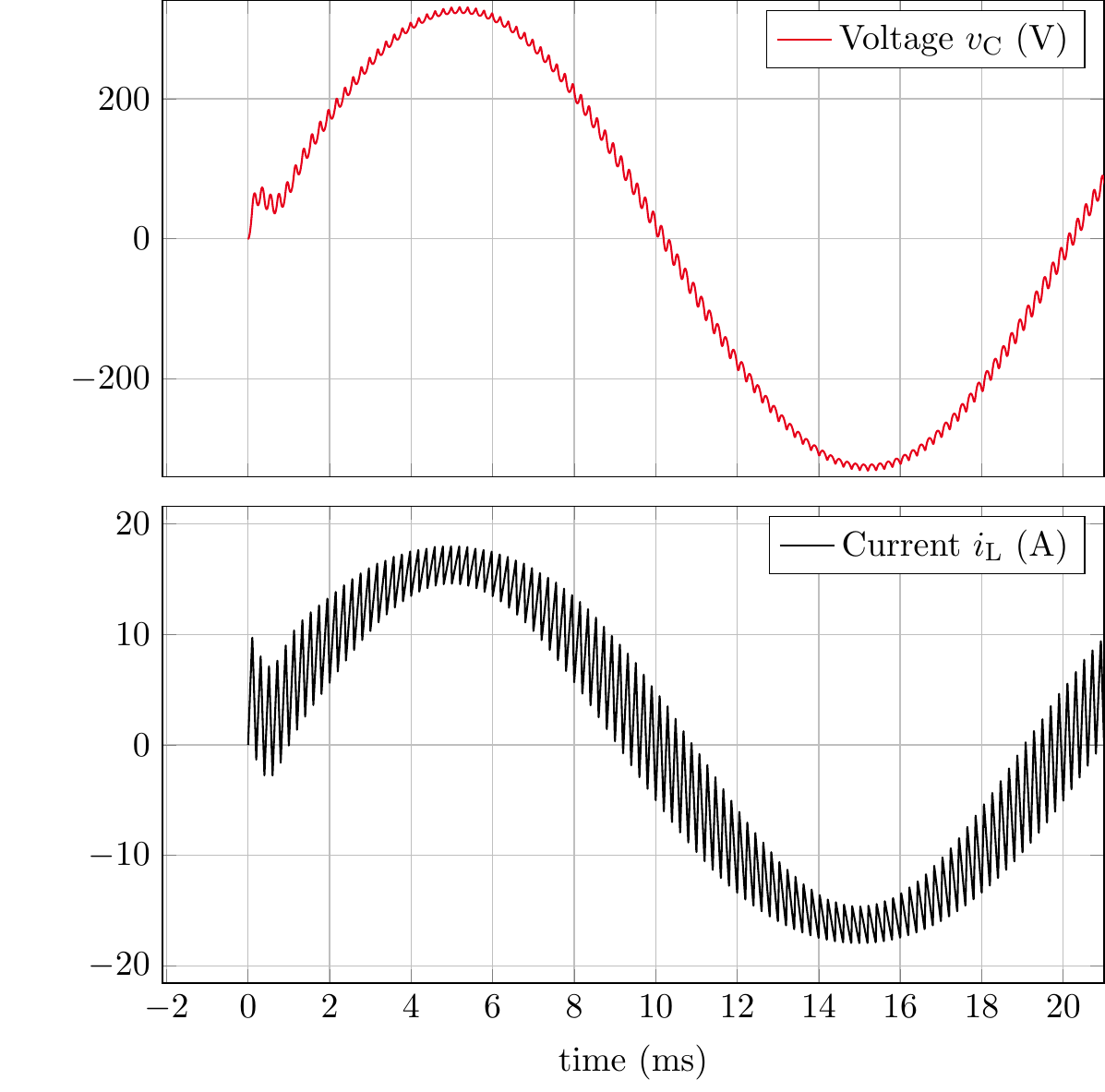}
  \end{subfigure}
  \caption{Excerpt of the excitation (top) and the solution of the inverter (bottom) using sinusoidally varying duty cycle.}
  \label{fig:excerptSolutionAndExcitation}
\end{figure}

The buck converter is described by the ordinary differential equation
\begin{equation}
  {\left[\begin{array}{cc} L & 0  \\ 0 & C  \end{array}\right]}_{}
  \frac{\dd}{\dd t}{\left[\begin{array}{cc} \iL \\ \vC \end{array}\right]}_{}
  + {\left[\begin{array}{cc} R_\L & 1  \\ -1 & 1/R  \end{array}\right]}_{}  
  \!
  {\left[\begin{array}{cc} \iL \\ \vC \end{array}\right]}_{}
  \,=\,
  { 
    \left[\begin{array}{cc} v_{\mathrm i}(t, \du(t)) \\ 0  \end{array}\right]}_{} \, ,
  \label{equ:buckConvODEs}
\end{equation}
where $L=4\,$mH (inductance of the coil), $C=10$\,$\mu$F (capacitance of the capacitor), $R_\L=10$\,m$\Omega$ (coil resistance), and $R=20\,\Omega$ (load resistance) are fixed parameters, and $\vC$, $\iL$ and $\vi$ are the voltage at the capacitor, the current through the coil and the PWM excitation, respectively. The excitation for the inverter is generated using natural sampling PWM with a sawtooth carrier. It can be written as
\begin{equation}
  \vi(t, \du(t))=\vibar \mathrm{sgn}(\du(t)-s(t)),
\end{equation}
where $s(t)=\frac{t}{\Ts}\text{ mod } 1$ is the sawtooth carrier, $\vibar$ is the peak excitation voltage, and $\mathrm{sgn}(t)$ is the sign function. The switching frequency is fixed at $\fs=1/\Ts=5\,$kHz. An excerpt of the excitation is shown in Fig.~\ref{fig:excerptSolutionAndExcitation}.

For the MPDE formulation we choose the multivariate excitation as $\vihat(t_1, t_2)=\vi(t_2, \du(t_1))$, i.e., we force the duty cycle, which is slowly varying compared to the switching of the excitation, to evolve along the time scale $t_1$ and the switching to occur along the time scale $t_2$. As it turns out the right-hand side of the ODEs after the MPDE approach, i.e., \eqref{equ:integralC} is also linearly dependent on the duty cycle and thus allows a cheap evaluation. The proof is analog to the one detailled in Section~\ref{sec4}.
The duty cycle is assumed to be sinusoidal and given by
\begin{equation}
  \du(t_1)=0.5\left(\frac{\vCbarDesired}{\vibar}\sin(2\pi\fac t_1)+1\right),
\end{equation}
where $\vCbarDesired$ is the desired peak output voltage of the converter and $\fac$ is the desired frequency of the AC output voltage. The constants are fixed to $\vibar=350\,$V, $\vCbarDesired=325\,$V (corresponding to $230\,$V effective voltage) and $\fac=50\,$Hz. The resulting inverter output, i.e., voltage at the capacitor and current through the coil, are depicted in Fig.~\ref{fig:excerptSolutionAndExcitation}. The multivariate solution of the MPDEs is depicted in Fig.~\ref{fig:buckConverterDCACoutputMVR}. The solution of the original equations \eqref{equ:origODEs} is marked as black line.

\begin{figure}
  \centering
  \includegraphics{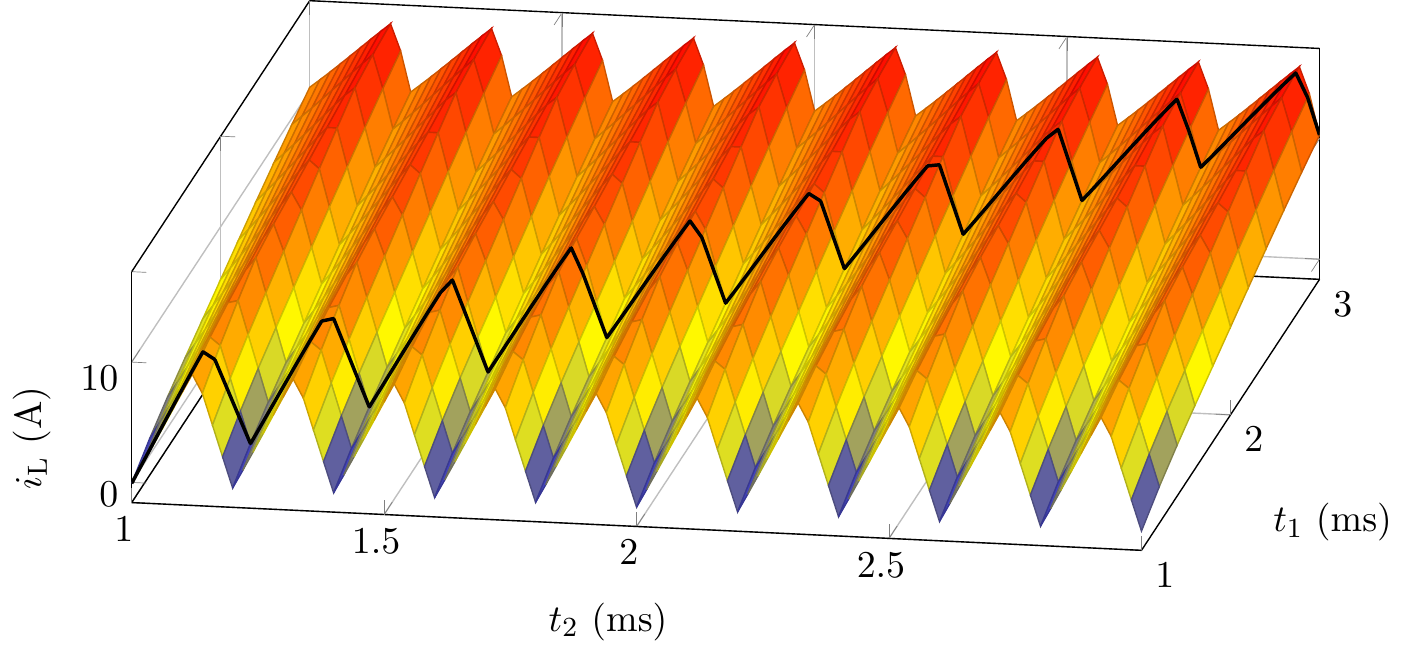}
  \caption{Excerpt of the multivariate solution of the inverter. The solution of the original ODEs is marked as black line. }
  \label{fig:buckConverterDCACoutputMVR}
\end{figure}

The MPDE approach is implemented in MATLAB and the solver \texttt{ode15s} is used for the Simulink, the time discretization and the MPDE approach simulation. For the MPDE approach, the maximum order \texttt{MaxOrder} for the solver is set to 2 whilst for the other two methods the original setting of 5 is used. To compare the solutions, the relative $\Ltwo$ error of the capacitor voltage on the simulation interval~${\Omega=[0,4\Tac]}$
\begin{equation}
  \epsilon_{\mathrm{v}} = \frac{||v_\mathrm{C,ref}(t)-v_\mathrm{C}^h(t)||_{\Ltwo(\Omega)}}{||v_\mathrm{C,ref}(t)||_{\Ltwo(\Omega)}}
  \label{equ:l2erroreps}
\end{equation}
is approximated using the mid-point quadrature rule. The error of the current through the coil $\epsilon_{\mathrm{i}}$ is defined analogously.
The reference solution $v_\mathrm{C,ref}(t)$ is calculated using a very fine tolerance of $\mathtt{abstol}=\mathtt{reltol}=10^{-12}$ and a maximum step size of $\Ts/1000$ in Simulink. 

MPDE approach simulation results are obtained for three different settings:
\begin{enumerate}
 \item Lowest order approximation: $\degree=1$ and $K=1$ (corresponds to Finite Element hat functions),
 \item Medium order approximation: $\degree=2$ and $K=1$,
 \item High order approximation: $\degree=3$ and $K=3$.
\end{enumerate}

The error $\epsilon$ of the MPDE approach using these settings is calculated for different tolerances of the time discretization and is depicted in Fig.~\ref{fig:errorMPDEapproach} for both the current through the coil and the voltage at the capacitor. The error is evaluated using 100 points per cycle. Reference solution values which are not available at the corresponding points are interpolated linearly. Fig.~\ref{fig:errorMPDEapproach} shows that the error decreases with smaller tolerances for the solver until it stagnates, which is due to the fact that the approximation of the Galerkin approach bounds the accuracy. It is also visible that with better discretization settings for the Galerkin approach, the stagnation is shifted to smaller tolerances. For the three settings introduced above we now fix the tolerances for the solver as such, that we achieve highest accuracy without wasting computational effort in the stagnation region. The corresponding tolerances are marked as dots in Fig.~\ref{fig:errorMPDEapproach} and summarized in Table~\ref{tab:tableAccuracy}. To be able to compare statistical data like the number of time steps, number of LU decompositions and number of function evaluations used by the solvers, the accuracy of conventional time discretization in MATLAB and PLECS are controlled by the tolerance $\mathtt{abstol}=\mathtt{reltol}$. The error of both with respect to the tolerance is shown in Fig.~\ref{fig:errorSimulinkAndTDSimulation} for both voltage and current. The errors corresponding approximately to the errors of the MPDE approach are listed in Table~\ref{tab:tableAccuracy} with the associated solver tolerance. Statistical data of all approaches, i.e., number of time steps, number of failed steps, number of LU decompositions, number of function evaluations and number of forward/backward substitution (solution of linear systems) are compared in Table~\ref{tab:tableSpeedup} for the settings in Table~\ref{tab:tableAccuracy}. 
For the PLECS simulation only the number of time steps is supplied. The other figures of merit can to our knowledge not be extracted from the Simulink solvers. For high solver tolerance, i.e., $\mathtt{abstol}=\mathtt{reltol}=10^{-2}$, the conventional time discretization in MATLAB is slightly more accurate than the PLECS results (see Fig.~\ref{fig:errorSimulinkAndTDSimulation}) since the \texttt{MaxStep} option is used to guarantee that no switching events are missed. This explains the considerably higher number of time steps compared to PLECS in Table~\ref{tab:tableSpeedup}, setting 1 (lowest order). For the other settings, the number of time steps between MATLAB and PLECS is almost similar.

\begin{figure}
  \begin{subfigure}[c]{1\textwidth}
    \centering
    \includegraphics{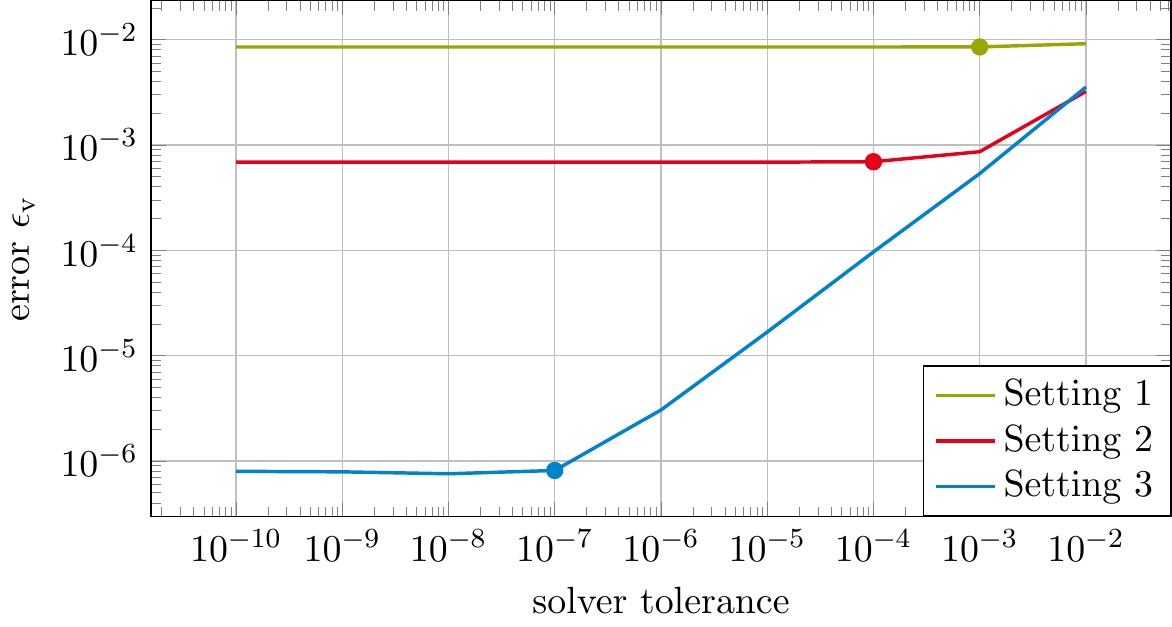}
  \end{subfigure}
  
  \vspace{1em}
  \begin{subfigure}[c]{1\textwidth}
    \centering
    \includegraphics{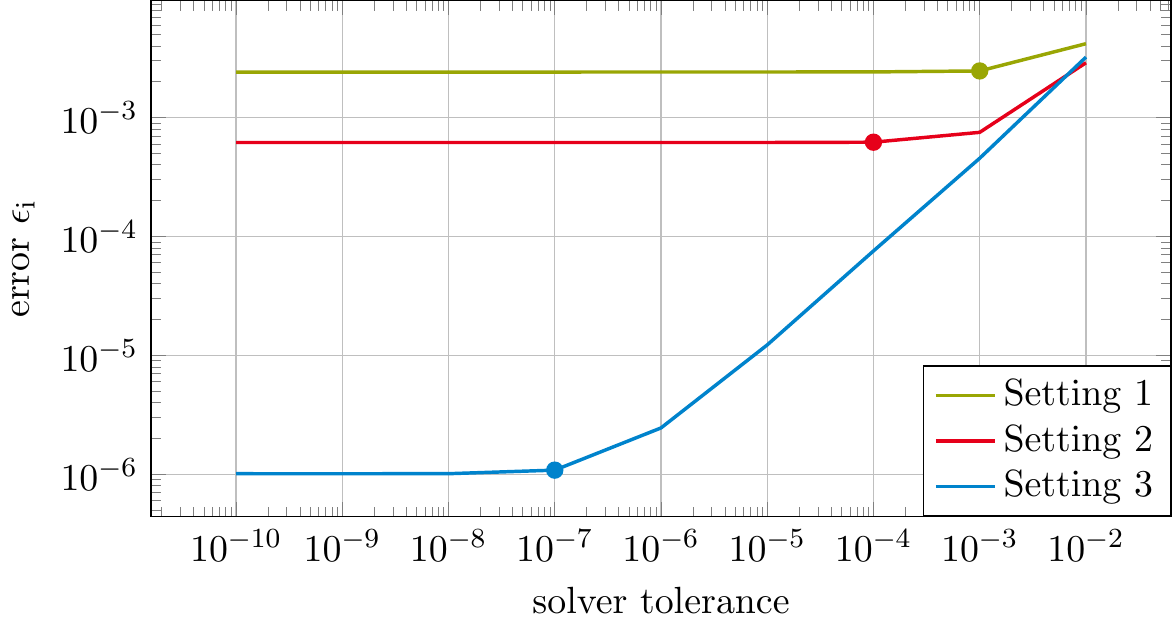}
  \end{subfigure}
  \caption{Error of the MPDE approach versus the solver tolerance for time discretization. The dots mark the tolerances chosen for the comparison to the conventional time discretization methods. (top) Error of the voltage at the capacitor. (bottom) Error of the current through the coil.}
  \label{fig:errorMPDEapproach}
\end{figure}

\begin{figure}
  \centering
  \includegraphics{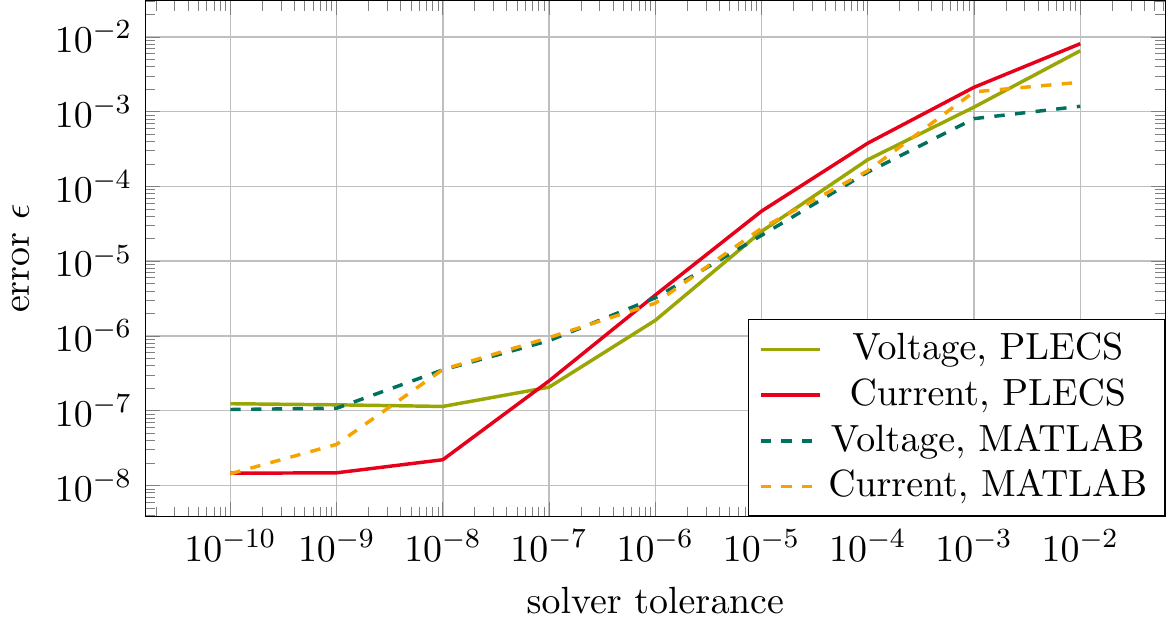}
  \caption{Error of the time discretization in MATLAB and PLECS versus the solver tolerance. }
  \label{fig:errorSimulinkAndTDSimulation}
\end{figure}

As can be seen in Table~\ref{tab:tableSpeedup} the MPDE approach is several times faster than the conventional methods with respect to the considered quantities. Especially when low or medium accuracy (setting 1 and 2) is necessary, the speedup is very high. Furthermore the efficiency of the method increases with higher switching frequency, see \cite{Pels_2017ad}. If a switching frequency is assumed to be twice as high, for instance, the conventional time discretization will likely take twice as much time since twice as many ripples have to be resolved. The MPDE approach on the other hand will need approximately the same number of time steps since the envelope does not change. As a result the speedup compared with the conventional methods would be twice as high.

It should be noted, that the actual efficiency in terms of computing time depends on the efficiency of the solver used for the solution of the equation systems. Since the equation systems in the MPDE approach are larger than the ones of the original problem, the solver will take more time. This has to be taken into account when choosing the number of basis functions for the solution expansion. The MPDE approach will be especially efficient, if the computational effort for function evaluation is higher than the effort for solving the equation systems.

\begin{center}
  \begin{table*}[t]
    \footnotesize
    \caption{Accuracy of the MPDE approach and the conventional simulation using time discretization in MATLAB and PLECS.\label{tab:tableAccuracy}}
    \centering
    \begin{tabular*}{500pt}{@{\extracolsep\fill}p{1.5cm}p{0.3cm}lp{0.3cm}lp{0.3cm}lc}
      \toprule
      &\multicolumn{2}{@{}c@{}}{\textbf{MPDE}} & \multicolumn{2}{@{}c@{}}{\textbf{MATLAB}} & \multicolumn{2}{@{}c@{}}{\textbf{PLECS}} \\\cmidrule{2-3}\cmidrule{4-5}\cmidrule{6-7}
      \textbf{Simulation setting} & \textbf{Tol.}  & \textbf{Error} voltage/current & \textbf{Tol.}  & \textbf{Error} voltage/current  & \textbf{Tol.}  & \textbf{Error} voltage/current \\
      \midrule
      lowest order & $10^{-3}$  & $8.5\cdot 10^{-3}$ / $2.5\cdot 10^{-3}$ & $10^{-2}$  & $1.2\cdot 10^{-3}$ / $2.5\cdot 10^{-3}$  & $10^{-2}$ & $6.5\cdot 10^{-3}$ / $8.1\cdot 10^{-3}$ \\
      medium order & $10^{-4}$  & $6.9\cdot10^{-4}$ / $6.2\cdot10^{-4}$  &  $10^{-4}$  &    $1.6\cdot10^{-4}$ / $1.6\cdot10^{-4}$ & $10^{-4}$ & $2.3\cdot 10^{-4}$ / $3.8\cdot 10^{-4}$ \\
      high order & $10^{-7}$  & $8.1\cdot10^{-7}$ / $1.1\cdot10^{-6}$  &  $10^{-6}$  &    $3.2\cdot10^{-6}$ / $2.7\cdot10^{-6}$   & $10^{-6}$ & $1.6\cdot 10^{-6}$ / $3.5\cdot 10^{-6}$ \\
      \bottomrule
    \end{tabular*}
  \end{table*}
\end{center}

\begin{center}
  \begin{table*}[t]
    \footnotesize
    \caption{Speedup of the MPDE approach compared to the conventional simulation using MATLAB and PLECS in different simulation settings. \label{tab:tableSpeedup}}
    \centering
    \begin{tabular*}{500pt}{@{\extracolsep\fill}p{1.8cm}p{0.7cm}p{1.4cm}p{1.2cm}p{0.7cm}p{1.4cm}p{1.2cm}p{0.7cm}p{1.4cm}p{1.2cm}@{\extracolsep\fill}}
      \toprule
      & \multicolumn{3}{@{}c@{}}{\textbf{Setting 1 (lowest order)}} & \multicolumn{3}{@{}c@{}}{\textbf{Setting 2 (medium order)}} & \multicolumn{3}{@{}c@{}}{\textbf{Setting 3 (high order)}} \\\cmidrule{2-4}\cmidrule{5-7}\cmidrule{8-10}
      \textbf{} & \textbf{MPDE}  & \textbf{MATLAB/ PLECS} & \textbf{Speedup (approx.)} & \textbf{MPDE}  & \textbf{MATLAB/ PLECS} & \textbf{Speedup (approx.)} & \textbf{MPDE}  & \textbf{MATLAB/ PLECS} & \textbf{Speedup (approx.)}  \\
      \midrule
      time steps                                          & 235   & 10533/4243   & 44/18   & 474   & 12410/9940   & 26/21     & 4516   & 18476/17508     & 4/4    \\
      failed steps                                        & 50    & 36           & 0.7     & 76    & 805          & 11        & 111    & 2641            & 24     \\
      LU decom.                                & 95    & 2535         & 27      & 157   & 4369         & 28        & 728    & 7435            & 10      \\
      solution lin. systems   & 557   & 21938        & 39      & 1062  & 27230        & 26        & 7553   & 43034           & 6      \\
      \bottomrule
    \end{tabular*}
  \end{table*}
\end{center}

\section{Conclusions} \label{sec6}
A multirate method for the efficient simulation of DC-AC switch-mode inverters has been presented. The system of equations describing the converter circuit is first formulated as a system of Multirate Partial Differential Equations (MPDEs), which allow to split the solution into components of different explicitly stated time scales. The MPDEs are efficiently solved using a combination of a Galerkin approach with B-spline basis functions for the solution expansion, and a conventional time discretization. The functionality of the proposed approach is verified on a simple inverter circuit with sinusoidal AC output. The computational efficiency is analyzed among others in terms of the number of time steps, number of LU decompositions, number of functions evaluations, compared to a conventional time discretization of the problem. It shows the high potential efficiency of the method. The method is particularly efficient in applications in which the function evaluations, i.e., the evaluations of the matrices/functions in the differential equation describing the application are computationally expensive.

\section*{Acknowledgments}
This work is supported by the ``Excellence Initiative'' of German Federal and State Governments and the Graduate School CE at TU Darmstadt.

\bibliographystyle{ieeetr} 
\bibliography{abbrv,english,library}

\end{document}